\newtheorem{theorem}{Theorem}
\newtheorem*{theorem*}{Theorem}
\newtheorem{lemma}[theorem]{Lemma}
\DeclareMathOperator*{\Exp}{\mathbb{E}}
\newcommand{\one}{{\mathds 1}}
\newcommand{\remove}[1]{}
\newcommand{\F}{\mathbb{F}}
\newcommand{\R}{\mathbb{R}}
\newcommand{\tf}{\widehat{f}}
\title{An Entropic Proof of Chang's Inequality}
\author{
Russell Impagliazzo\thanks{\texttt{russell@cs.ucsd.edu}, Department of Computer Science and Engineering, University of California San Diego} \and 
Cristopher Moore\thanks{\texttt{moore@santafe.edu}, Department of Computer Science, University of New Mexico and Santa Fe Institute} \and Alexander Russell\thanks{\texttt{acr@cse.uconn.edu}, Department of Computer Science and Engineering, University of Connecticut}}
\begin{document}
\maketitle

\begin{abstract}
Chang's lemma is a useful tool in additive combinatorics and the analysis of Boolean functions.  Here we give an elementary proof using entropy.  The constant we obtain is tight, and we give a slight improvement in the case where the variables are highly biased.
\end{abstract}

\section{The lemma}

For $S \in \{0,1\}^n$, let $\chi_k: \{ \pm 1 \}^n \to \R$ denote the character 
\[
\chi_S(x) = \prod_{i \in S} x_i \, . 
\]
For any function $f:\{ \pm 1 \}^n \to \R$, we can then define its Fourier transform $\tf:\{0,1\}^n \to \R$ as 
\[
\tf(S) = \Exp_x f(x) \chi_S(x) = \frac{1}{2^n} \sum_x f(x) \chi_S(x) \, . 
\]
For characters of Hamming weight $1$, we will abuse notation by writing $\tf(i)$ instead of $\tf(\{i\})$. 

Chang's lemma~\cite{talagrand,chang} places an upper bound on the total Fourier weight, i.e., the sum of $\tf^2$, of the characteristic function of a small set on the characters with Hamming weight one.

\begin{lemma}
\label{lem:chang}
Let $A \subseteq \{ \pm 1 \}^n$ such that $|A| = 2^n \alpha$, and let $f=\one_A$ be its characteristic function.  Then
\[
\sum_{i=1}^n \tf(i)^2 \le 2 \alpha^2 \ln \frac{1}{\alpha} \, . 
\]
\end{lemma}

\begin{proof}
Suppose that we sample $x$ according to the uniform distribution on $A$.  Since the mutual information is nonnegative, the entropy $H(x)$ is at most the sum of the entropies of the individual bits, 
\[
H(x) \le \sum_{i=1}^n H(x_i) \, .
\]
This gives
\begin{equation}
\label{eq:ineq}
n \ln 2 + \ln \alpha 
\le \sum_{i=1}^n h(p^+_i) 
\end{equation}
where $p^+_i$ denotes the probability that $x_i=+1$, 
\[
p^+_i 
= \frac{1}{2}\left( 1+\Exp_{x \in A} x_i \right) 
%= \frac{1}{2}\left( 1+\tp(i) \right) 
= \frac{1}{2}\left( 1+\frac{\tf(i)}{\alpha} \right) \, . 
\]
and where $h$ denotes the entropy function 
\[
h(p) = -p \ln p - (1-p) \ln (1-p) \, . 
\]
The Taylor series around $p=1/2$ gives
\begin{equation}
\label{eq:taylor}
h\!\left( \frac{1+x}{2} \right) = \ln 2 \;-\!\! \sum_{t=2, 4, 6, \ldots} \frac{x^t}{t(t-1)} \le \ln 2 - \frac{x^2}{2}  \, , 
\end{equation}
so~\eqref{eq:ineq} becomes
\[
\ln \alpha \le - \frac{1}{2} \sum_{i=1}^n \frac{\tf(i)^2}{\alpha^2} \, , 
\]
Rearranging completes the proof.
\end{proof}

\section{Variations}

The lemma (and our proof) apply equally well to the Fourier weight $\sum_{S \in B} \tf(S)^2$ of any basis $B$ of $\F_2^n$, since the set of parities $\{\prod_{i \in S} x_i \mid S \in B\}$ determines $x$.  
%Equivalently, changing $\tf(S)$ by applying an $\F_2$-linear transformation to $S$ permutes the function values $f(x)$.  
This gives the following commonly-quoted form of Chang's lemma.

\begin{lemma}
Let $A \subseteq \{ \pm 1 \}^n$ such that $|A| = 2^n \alpha$, and let $f=\one_A$ be its characteristic function.  Fix $\rho > 0$ and let $R \subset \F_2^n$ be the set $\{S : |\tf(S)| > \rho \alpha \}$.  Then $R$ spans a space of dimension less than $d = 2 \rho^{-2} \ln (1/\alpha)$.
\end{lemma}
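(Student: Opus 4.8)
The plan is to leverage the generalized form of Chang's lemma noted just above: for \emph{any} basis $B$ of $\F_2^n$, the very same entropy argument yields
\[
\sum_{S \in B} \tf(S)^2 \le 2 \alpha^2 \ln \frac{1}{\alpha} \, ,
\]
since the parities $\{\chi_S \mid S \in B\}$ jointly determine $x$, whence $H(x) \le \sum_{S \in B} H(\chi_S(x))$. I would take this bound as given and reduce the dimension claim to it by a short packing argument.

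First I would let $d'$ denote the dimension of the span of $R$ and extract from $R$ a linearly independent set $B' \subseteq R$ with $|B'| = d'$, i.e.\ a basis of $\mathrm{span}(R)$ whose elements all lie in $R$ itself. This is possible precisely because $R$ spans a space of dimension $d'$. I would then extend $B'$ to a full basis $B$ of $\F_2^n$, so that $B' \subseteq B$ and $|B| = n$.

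The two ingredients combine directly. On one hand, since every term $\tf(S)^2$ is nonnegative, discarding the indices in $B \setminus B'$ gives
\[
\sum_{S \in B'} \tf(S)^2 \le \sum_{S \in B} \tf(S)^2 \le 2 \alpha^2 \ln \frac{1}{\alpha} \, .
\]
On the other hand, each $S \in B' \subseteq R$ satisfies $|\tf(S)| > \rho \alpha$ by the definition of $R$, so $\tf(S)^2 > \rho^2 \alpha^2$ and therefore
\[
\sum_{S \in B'} \tf(S)^2 > d' \rho^2 \alpha^2 \, .
\]
Chaining these inequalities yields $d' \rho^2 \alpha^2 < 2 \alpha^2 \ln(1/\alpha)$, and dividing through by $\rho^2 \alpha^2$ gives $d' < 2 \rho^{-2} \ln(1/\alpha) = d$, which is the claim.

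I do not expect a deep obstacle here; the whole argument is essentially a packing bound read off from the generalized lemma. The one point requiring care is the reduction step: I must choose the spanning basis \emph{from within} $R$ (rather than an arbitrary basis of $\mathrm{span}(R)$), so that the lower bound $\tf(S)^2 > \rho^2 \alpha^2$ applies to every selected $S$, and I must invoke nonnegativity of the squared coefficients to pass from the full basis $B$, to which the lemma applies, back down to the subset $B'$. The strictness of the inequality defining $R$ is exactly what produces the strict bound $d' < d$.
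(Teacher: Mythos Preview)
Your argument is correct and is essentially the paper's own proof: pick $d'$ independent vectors from $R$, complete to a basis $B$, and compare the lower bound $d'\rho^2\alpha^2$ from the chosen vectors against the upper bound $2\alpha^2\ln(1/\alpha)$ from the generalized Lemma~\ref{lem:chang}. The paper phrases this as a contradiction (assuming $d'\ge d$) rather than bounding $d'$ directly, but the content is identical.
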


\begin{proof}
If $R$ spans a space of dimension $d$ or greater, there is a set of $d$ linearly independent vectors in $R$.  Completing to form a basis $B$ gives $\sum_{S \in B} \tf(S)^2 > 2 \alpha^2 \ln (1/\alpha)$, violating Lemma~\ref{lem:chang}.
\end{proof}

For any integer $k \ge 1$, there are bases consisting entirely of vectors of Hamming weight $k$.  Fixing $k$ and averaging over all such bases gives
\[
\sum_{S: |S|=k} \tf(S)^2 
\le \frac{2}{n} {n \choose k} \,\alpha^2 \ln \frac{1}{\alpha}
\le \frac{2 n^{k-1}}{k!} \,\alpha^2 \log (1/\alpha) \big) \, . 
\]
This also follows immediately from Shearer's lemma.  However, this is noticeably weaker than the ``weight $k$ bound''
\[
\sum_{S: |S|=k} \tf(S)^2 = O\big( \alpha^2 \log^k (1/\alpha) \big) \, .
\]

Finally, we note that if some bits are highly biased, i.e., if $|\tf(i)|/\alpha$ is close to $1$, 
% so that $p_i^+$ is close to $0$ or $1$, 
we can replace~\eqref{eq:taylor} with the bound 
\begin{equation}
\label{eq:tight-biased}
h(p) \le p(1- \ln p) \, ,
\end{equation}
which is tight when $p$ is small.  Combining this with the corresponding bound for $p$ close to $1$ gives
\[
h\!\left( \frac{1+x}{2} \right) \le \frac{1-|x|}{2} \left( 1 - \ln \frac{1-|x|}{2} \right) \, .
\]
We compare this bound with~\eqref{eq:taylor} in Figure~\ref{fig:entropy-bounds}.  This gives another version of Lemma~\ref{lem:chang}:

\begin{lemma}
\label{lem:chang-tight}
Let $A \subseteq \{ \pm 1 \}^n$, let $f=\one_A$ be its characteristic function, and let 
\[
\delta_i=\frac{1}{2} \left( 1-\frac{|\tf(i)|}{\alpha} \right) = \min\left( p_i^+, 1-p_i^+ \right) \, . 
\]
Then
\begin{equation}
\label{eq:chang-biased}
\sum_{i=1}^n \delta_i \left( 1-\ln \delta_i \right) \ge \ln |A| \, . 
\end{equation}
\end{lemma}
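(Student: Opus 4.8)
The plan is to reuse the entropy argument behind Lemma~\ref{lem:chang} essentially unchanged, substituting only the per-coordinate entropy estimate. First I would sample $x$ uniformly from $A$, so that its entropy is exactly $H(x) = \ln|A|$. Subadditivity of entropy — equivalently, nonnegativity of the mutual information, exactly as in the proof of Lemma~\ref{lem:chang} — then gives
\[
\ln|A| = H(x) \le \sum_{i=1}^n H(x_i) = \sum_{i=1}^n h(p_i^+),
\]
which is nothing but inequality~\eqref{eq:ineq} with $\ln|A| = n\ln 2 + \ln\alpha$ on the left.

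The only new step is to bound each $h(p_i^+)$ by $\delta_i(1 - \ln\delta_i)$ rather than by the quadratic coming from~\eqref{eq:taylor}. Since $h$ is symmetric about $1/2$ and $\delta_i = \min(p_i^+, 1-p_i^+)$, I would write $h(p_i^+) = h(\delta_i)$ and apply the biased bound~\eqref{eq:tight-biased} to the argument $\delta_i \le 1/2$, yielding $h(\delta_i) \le \delta_i(1 - \ln\delta_i)$. This is precisely the combined bound $h\bigl(\tfrac{1+x}{2}\bigr) \le \tfrac{1-|x|}{2}\bigl(1 - \ln\tfrac{1-|x|}{2}\bigr)$ stated just before the lemma, specialized to $x = \tf(i)/\alpha$. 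Substituting into the display above then produces~\eqref{eq:chang-biased} directly.

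The one point that genuinely needs checking is~\eqref{eq:tight-biased} itself, and I expect this to be the only real obstacle. Writing $q = 1-p$, the inequality $h(p) \le p(1-\ln p)$ is equivalent to $-q\ln q \le 1-q$ on $(0,1]$, i.e.\ to $g(q) := 1 - q + q\ln q \ge 0$. A one-line calculus check suffices: $g'(q) = \ln q < 0$ on $(0,1)$ while $g(1) = 0$, so $g$ decreases to $0$ at its right endpoint and is therefore nonnegative throughout. With~\eqref{eq:tight-biased} in hand, the remainder of the argument is a direct transcription of the proof of Lemma~\ref{lem:chang}, so no further difficulty arises.
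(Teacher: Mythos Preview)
Your proposal is correct and follows exactly the route the paper intends: reuse the subadditivity step~\eqref{eq:ineq} from Lemma~\ref{lem:chang} and replace the quadratic bound~\eqref{eq:taylor} by the biased bound~\eqref{eq:tight-biased}, using the symmetry $h(p_i^+)=h(\delta_i)$. Your calculus verification of~\eqref{eq:tight-biased} is a welcome addition that the paper leaves implicit.
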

\noindent
This is nearly tight, for instance, if $A$ is the set of vectors with Hamming weight $1$.  Then $|A|=n$, $\delta_i=1/n$, and~\eqref{eq:chang-biased} reads $1+\ln n \ge \ln n$.  
%As before, the same bound applies to any basis for $\F_2^n$.

\begin{figure}
\begin{center}
\includegraphics[width=4in]{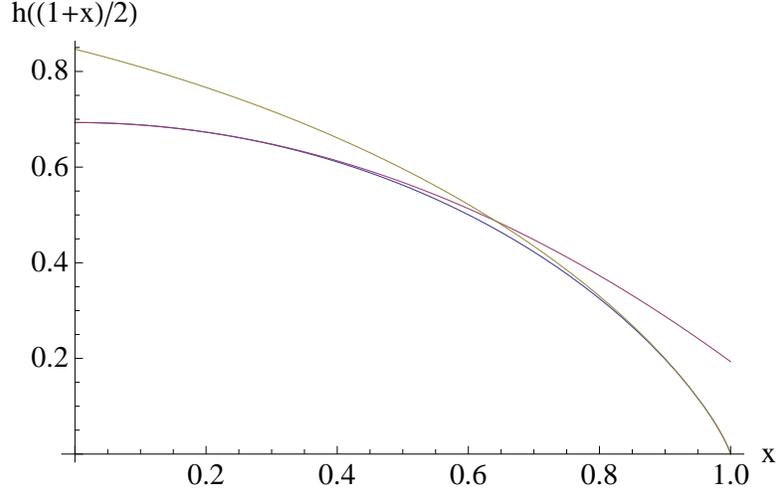}
\end{center}
\caption{The entropy function $h(p)$ where $p=(1+x)/2$ and $x \le 0 \le 1$, with the upper bounds~\eqref{eq:taylor} (which is tight when $|x|$ is small) and~\eqref{eq:tight-biased} (which is tight when $|x|$ is close to 1).}
\label{fig:entropy-bounds}
\end{figure}

\section*{Acknowledgments}  
We thank Ryan O'Donnell for a wonderful set of lectures on the analysis of Boolean functions at the Bellairs Research Institute, and Ran Raz for helpful communications.  C.M. and A.R. are supported by NSF grant CCF-1117426 and ARO contract W911NF-04-R-0009.


\begin{thebibliography}{99}

\bibitem{talagrand} Michel Talagrand, ``How much are increasing sets positively correlated?''  
\emph{Combinatorica} 16 (2) 243--258, 1996.

\bibitem{chang}  Mei-Chu Chang, ``A polynomial bound in Freiman's theorem.''
\emph{Duke Math. J.} 113(3) 399--419, 2002.

\end{thebibliography}
\end{document}